\title{
	Exact Byzantine Consensus Under Local-Broadcast Model
	\thanks{This research is supported in part by National Science Foundation award 1409416, and Toyota InfoTechnology
		Center. Any opinions, findings, and conclusions or recommendations expressed here are those of the authors and do not necessarily reflect the views of the funding agencies or the U.S. government.
	}
}
\author[1]{
	Syed Shalan Naqvi
}
\author[1]{
	Muhammad Samir Khan
}
\author[2]{
	Nitin H. Vaidya
}
\affil[1]{
	Department of Computer Science \protect\linebreak
	University of Illinois at Urbana-Champaign \protect\linebreak
	\texttt{\{naqvi5,mskhan6\}@illinois.edu} \protect\linebreak
}
\affil[2]{
	Department of Computer Science \protect\linebreak
	Georgetown University \protect\linebreak
	\texttt{nv198@georgetown.edu}
}
\newtheorem{theorem}{Theorem}[section]
\newtheorem{lemma}[theorem]{Lemma}
\newtheorem{definition}[theorem]{Definition}
\renewenvironment{proof}{\noindent{\bf Proof:} \hspace*{1mm}}{
	\hspace*{\fill} $\Box$ }
\newcommand{ \connect }[1]{
	\stackrel{#1}{ \implies }
}
\newcommand{ \notconnect }[1]{
	\cancel{ \stackrel{#1}{ \implies } }
}
\newcommand{\floor}[1]{
	\lfloor #1 \rfloor
}
\newcommand{\ceil}[1]{
	\lceil #1 \rceil
}
\begin{document}
	\maketitle

	\section{Introduction}
This paper considers the problem of achieving exact Byzantine consensus in a synchronous system under a {\em local-broadcast} communication model.
The nodes communicate with each other via message-passing. The communication network is modeled as an undirected graph, with each vertex representing a
node in the system. Under the {\em local-broadcast} communication model, when any node transmits a message, all its neighbors in the communication graph receive the message reliably. This
communication model is motivated by wireless networks. For instance, in the communication graph in Figure 1, node 1 has neighbors 2 and 3, whereas
node 2 has neighbors 1, 4 and 5. Thus, when node 1 transmits a message, the message is received by both its neighbors, namely 2 and 3.
Similarly, when node 2 transmits a message, it is received by its neighbors 1, 4 and 5.

In this brief announcement,
we present necessary and sufficient conditions on the underlying communication graph to achieve exact Byzantine consensus under the local-broadcast communication model.

	\section{Related Work}
Under the classical point-to-point network model, $n\geq 3f+1$ and vertex connectivity $\geq 2f+1$ are both necessary and sufficient conditions for Byzantine consensus on
undirected graphs \cite{dolev1982byzantine,fischer1986easy}.
Closest to our contributions is the work on consensus using {\em partial broadcast channels}
that help achieve various forms of ``non-equivocation''
\cite{ravikant2004byzantine,jaffe2012price,li2016towards}.
Intuitively, non-equivocation prevents a node from transmitting inconsistent messages.
Prior work has considered various constraints on equivocation, however, surprisingly, the local-broadcast model 
considered in this paper does not seem to have been addressed yet. In a completely connected graph, a broadcast channel ensures that a node cannot send inconsistent messages to
any pair of nodes, and it is easy to see that $2f+1$ nodes suffice to tolerate $f$ Byzantine faults. \cite{ravikant2004byzantine} consider
networks modeled as a $(2,3)$-uniform hypergraph. In this setting, each transmission is viewed as occurring on one of the hyperedges, and the nodes belonging to the hyperedge
receive the message reliably. Motivated by the work in \cite{ravikant2004byzantine},
\cite{jaffe2012price} obtained further results on networks with 3-uniform hypergraphs.\cite{li2016towards} studies the problem of iterative approximate
Byzantine consensus using 3-hyperedges in which one of nodes in each hyperedge is identified as the unique sender.

For the local-broadcast communication model, \cite{koo2006reliable,bhandari2005podc,pelc2005broadcasting} consider
the problem of reliable {\em broadcast}. Although their communication model is analogous to this paper, we consider Byzantine {\em consensus}, whereas this prior work considers
reliable {\em broadcast}. It turns out that the network requirements for these two problems are quite different.

Another related line of research considers approximate Byzantine consensus algorithms
with a restrictive computation structure. In particular, each node
maintains a real-valued state, and in each iteration of the algorithm, a node
computes its new state as a {\em weighted linear combination} of its previous state and its neighbors' state. For this class of iterative approximate consensus algorithms, necessary and sufficient conditions
under the local-broadcast model have been identified \cite{zhang2012robustness}. These conditions are distinct from those obtained in our work, since we do not constrain the algorithm structure; additionally, we consider exact consensus.

	\section{System Model}
$G=(V,E)$ represents the communication graph. Each vertex $u\in V$ represents node $u$, and edge $(u,v)\in E$ if and only if nodes $u$ and $v$ can receive each other's message
transmissions. The system is synchronous.
The local-broadcast communication model is assumed. Thus, a transmission by any node $u\in V$ is received reliably (and identically) by all the nodes in the set
$\{v~|~(u,v)\in E\}$.
It is assumed that when a node $u$ receives a message, it can correctly identify the neighbor that sent the message.

Each node has a binary input, i.e., the input is in $\{0,1\}$. Up to $f$ of the nodes may be Byzantine faulty. A faulty node is assumed to have complete knowledge of the states of
all the nodes, the algorithm, and the communication graph.
A correct algorithm for Byzantine consensus must satisfy the following conditions:
\begin{itemize}
	\item \textbf{Agreement:} All non-faulty nodes decide on an identical value.
	\item \textbf{Validity:} If input of all the non-faulty nodes equals $b\in\{0,1\}$, then non-faulty nodes decide $b$.
	\item \textbf{Termination:} The algorithm terminates after a bounded duration of time.
\end{itemize}

\includegraphics[scale=0.5]{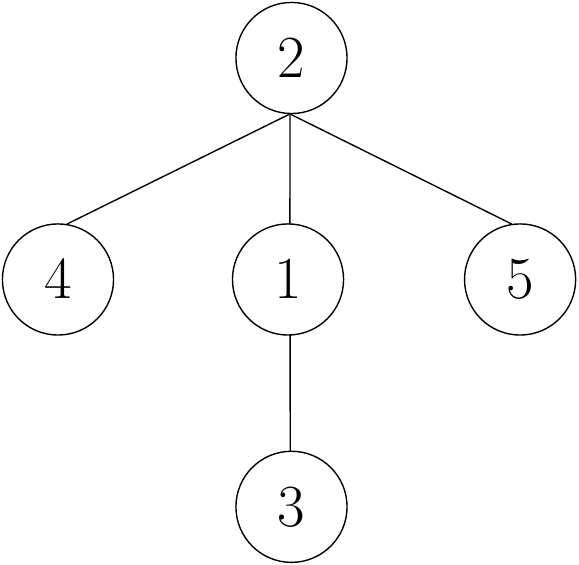}
\hspace*{1.2in}
\includegraphics[scale=0.6]{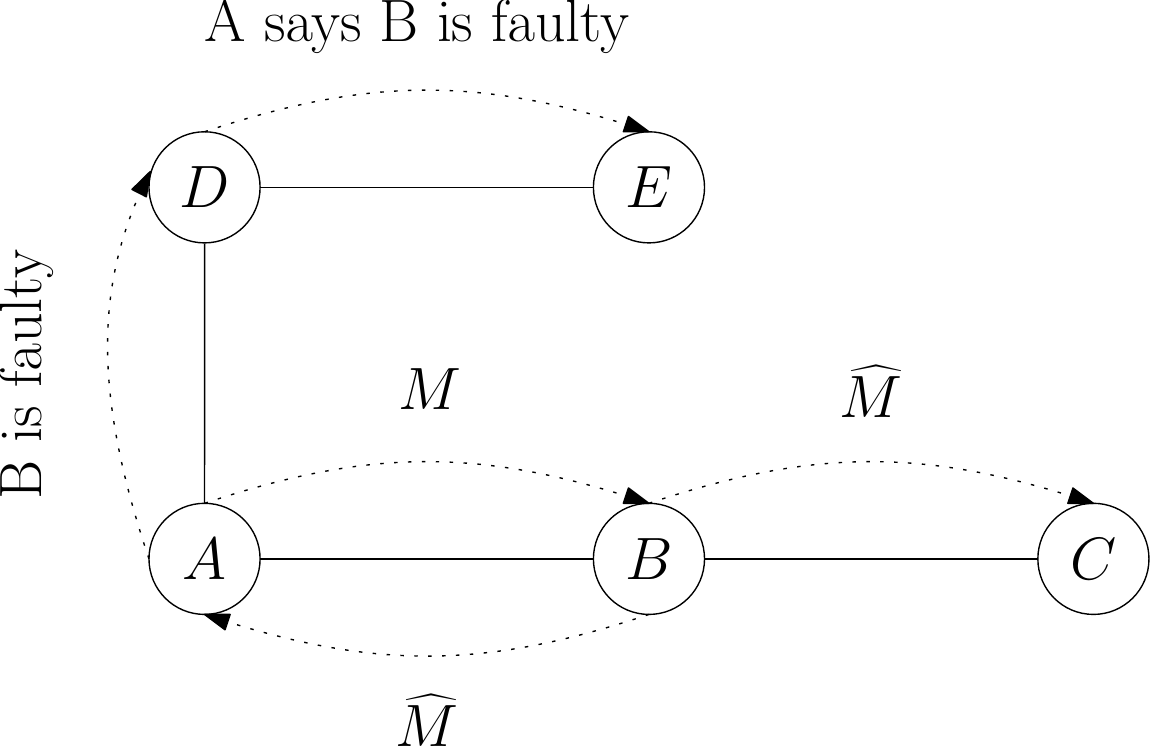}

Figure 1: Example graph \hspace*{1.2in}
Figure 2: Detecting message tampering

	\section{Our Results}
Before presenting the results, we illustrate a benefit of the local-broadcast model.
As an example, consider the illustration in Figure 2, where we assume that node B is Byzantine faulty, and other
nodes are non-faulty. In the figure, solid lines show edges between nodes, and directed dotted lines show the messages being transmitted. As shown in Figure 2, node A
transmits message $M$ to node B, and then expects B to forward $M$ to node C. However, node B tampers the message before forwarding to C. Instead of $M$, node B forwards message
$\widehat{M}$. Due to the local-broadcast property, node A will also receive $\widehat{M}$, and learn that node B is faulty. Subsequently, node A sends a message to node E along
path A-D-E claiming that node B is faulty\footnote{Some messages have been omitted from the figure for brevity.}. If the path travelled by the message is also included in the message, then node
E can now infer that at least one of the nodes on the path D-A-B must be faulty. Such inferences
under the local-broadcast model can be shown
to reduce the requirements on the communication graph, compared to the classical point-to-point communication model.
In particular, we have obtained the following results \cite{naqvi18thesis}.

\begin{theorem}[Necessary Conditions]
	\label{thm:necessary}
	Under the local-broadcast model, Byzantine consensus is impossible
	with up to $f$ Byzantine faults ($f>0$) if either condition below holds
	true:
	\begin{itemize}
		\item Minimum degree of $G(V,E)$ is less than $2f$, or
		\item Vertex connectivity of $G(V,E)$ is at most $\lfloor{3f/2}\rfloor$.
	\end{itemize}
\end{theorem}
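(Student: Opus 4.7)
My plan is to prove each of the two necessary conditions by contradiction, using indistinguishability arguments of the standard Byzantine-consensus flavor, adapted so that Byzantine nodes obey the local-broadcast constraint.

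For the minimum-degree bound, I would pick a node $u$ with $\deg(u) = d \le 2f-1$ and partition its neighborhood as $N(u) = A \cup B$ into disjoint sets with $|A|, |B| \le f$. I would then build two executions $E_1, E_2$: in $E_1$ all non-faulty nodes have input $0$ and $A$ is the Byzantine set, broadcasting the messages it would broadcast if it were non-faulty with input $1$ in a mirror world; in $E_2$ all non-faulty nodes have input $1$ and $B$ is Byzantine, carrying out the symmetric simulation. Validity forces the non-faulty decision to be $0$ in $E_1$ and $1$ in $E_2$, so by agreement $u$ must decide differently. The remaining step is to check that $u$'s transcript from $N(u)$ can be made bit-for-bit identical across the two executions, which, together with a common choice of $u$'s own input, delivers the contradiction.

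For the vertex-connectivity bound, I would assume a vertex cut $S$ with $|S| \le \lfloor 3f/2 \rfloor$ and let $G_A, G_B$ be two components of $G \setminus S$. Because $|S| \le 3f/2$, a simple counting argument produces a partition $S = S_1 \cup S_2 \cup S_3$ with every pairwise union $|S_i \cup S_j| \le f$. I would then run a hexagon-style indistinguishability chain across three executions: in execution $k$ the set $S_i \cup S_j$ with $\{i,j,k\} = \{1,2,3\}$ is Byzantine, and the inputs on $G_A, G_B$ are coordinated so that each consecutive pair of executions looks identical to the non-faulty nodes on one of the two components. Anchoring the chain by the all-$0$ and all-$1$ executions, on which validity forces opposite decisions, yields the contradiction.

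The main obstacle in both parts is honoring the local-broadcast constraint: a Byzantine node must transmit the same bits to every neighbor in each round. For the minimum-degree argument this constraint is essentially local to $N(u) \cup \{u\}$ and is handled by letting the faulty side of $N(u)$ maintain a single consistent virtual world. For the connectivity argument it is the crux of the matter: a cut node in $S$ with neighbors on both sides of the cut cannot separately equivocate toward $G_A$ and $G_B$, and this is exactly why the bound drops from the classical $2f+1$ to $\lfloor 3f/2 \rfloor + 1$. The three-way partition of $S$ is precisely the redundancy that lets two Byzantine pieces jointly simulate the missing side without any single cut node having to equivocate. Verifying that the partition always exists (with separate handling of $f$ even and $f$ odd) and that it supports the indistinguishability chain is the step I expect to be the most delicate.
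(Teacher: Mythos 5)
The paper itself defers the proof of Theorem~\ref{thm:necessary} to the cited thesis, but Appendix~B reveals the intended route: a single unified impossibility condition (the $F$-partition / ``$f$-good'' structure) is established once, and the two stated conditions are then extracted via Lemmas~\ref{lemma f good implies connectivity} and~\ref{lemma f good implies degree}. Your two constructions are in fact exactly the $F$-partitions those lemmas build --- your three-way split of the cut $S$ into $S_1, S_2, S_3$ with all pairwise unions of size at most $f$ is precisely the partition $T_1, T_2, T_3$ with $|T_1| \le \lceil f/2 \rceil$, $|T_2|, |T_3| \le \lfloor f/2 \rfloor$ in Lemma~\ref{lemma f good implies connectivity}, and your split of $N(u)$ into two halves of size at most $f$ is the construction in Lemma~\ref{lemma f good implies degree}. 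So at the combinatorial level you and the paper agree; the difference is that you prove the two conditions by two separate indistinguishability arguments rather than one argument against a common partition structure.

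The genuine gap is in the indistinguishability arguments themselves, most concretely in the minimum-degree case. You anchor $E_1$ and $E_2$ by validity and then give $u$ ``a common choice of its own input'' so that its transcript can be identical in both. These two requirements are incompatible: if $u$ has input $0$ in both executions, then in $E_2$ the non-faulty nodes do \emph{not} all have input $1$ (the non-faulty node $u$ has input $0$), so the validity condition as stated in the System Model imposes nothing on $E_2$, and the contradiction evaporates. A two-execution argument cannot close here; you need a longer chain of scenarios in which consecutive pairs are indistinguishable to \emph{some} non-faulty node and only the two endpoints are anchored by validity, with agreement propagating the forced decision along the chain. The same issue lurks in the connectivity part: you invoke a ``hexagon-style'' chain but describe only three executions, and in the intermediate ones the inputs on $G_A$ and $G_B$ differ, so validity applies to none of them --- the chain must be long enough (typically six scenarios, one per choice of faulty pair $S_i \cup S_j$ combined with each input assignment) that every consecutive pair shares an indistinguishable non-faulty witness. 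Finally, your claim that the faulty half of $N(u)$ can ``maintain a single consistent virtual world'' under local broadcast is exactly the point that needs proof rather than assertion: a faulty node must broadcast one message to all its neighbors, including non-faulty ones who know the real inputs, so the simulated world cannot be chosen freely; it must be arranged so that the targeted non-faulty node still cannot distinguish even though other nodes can. Until the chain of executions and the faulty nodes' broadcast-consistent strategies are written out, the proof is not complete.
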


\vspace*{4pt}

\begin{theorem}[Sufficient Conditions]
	\label{thm:sufficient}
	Under the local-broadcast model, Byzantine consensus can be achieved
	with up to $f$ Byzantine faults ($f>0$) if the vertex connectivity of
	$G(V,E)$ is at least $2f$.
	
\end{theorem}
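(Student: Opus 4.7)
The plan is to reduce Byzantine consensus on $G$ to consensus in the ``virtual broadcast channel'' model on the same vertex set, where it is classical that $n \geq 2f+1$ nodes suffice. Vertex connectivity at least $2f$ already guarantees $n \geq 2f+1$, so the entire reduction hinges on simulating, for every non-faulty source $s$, a reliable broadcast whose value is delivered consistently at every non-faulty receiver despite up to $f$ Byzantine relays along the way.

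To build such a simulation I would combine Menger's theorem with the overhearing mechanism of Figure 2. Vertex connectivity at least $2f$ yields $2f$ internally vertex-disjoint $s$-to-$t$ paths for every pair $(s,t)$. The source tags its value with the intended path and each relay forwards via local broadcast. The crucial observation is that whenever a non-faulty node $a$ hands the message to its path-successor $b$, local broadcast guarantees that $a$ overhears whatever $b$ re-transmits on the next hop; any tampering by $b$ produces publicly verifiable evidence against the edge $(a,b)$, and $a$ issues an accusation certifying that at least one endpoint of $(a,b)$ is Byzantine. The receiver accepts the value delivered along any path whose edges are all unaccused. Since at most $f$ of the $2f$ disjoint paths can even contain a faulty vertex, at least $f$ paths are entirely non-faulty, deliver the correct value, and are never accused.

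The main obstacle I anticipate is making the accusation machinery self-consistent and bounded. A faulty node may also emit bogus accusations, and with only $2f$ paths (not the $2f+1$ used in classical Dolev-style arguments) no naive majority at the receiver is decisive. I would formalize the invariant that every validly accused edge is incident to at least one truly faulty vertex, which caps the total number of accused edges by $O(fn)$ and bounds the depth of the layered protocol in which each ``round'' disseminates new accusations along still-unaccused sub-paths. Care is needed because the accusations themselves must be disseminated using the same $2f$-path mechanism, so the analysis likely proceeds by induction on either the number of still-unaccused edges or on $f$ itself, replacing an exposed faulty endpoint by one less fault to tolerate. Once pairwise reliable broadcast is in hand, running a Bracha- or phase-king-style consensus protocol on the virtual broadcast channel discharges Agreement, Validity, and Termination.
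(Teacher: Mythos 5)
Your overall architecture---simulate reliable pairwise broadcast over $2f$ vertex-disjoint paths, then run a classical protocol on the resulting virtual broadcast channel---is not the paper's route, and the obstacle you flag at the end is not a technicality you can defer: it is the entire difficulty, and your sketch does not overcome it. With only $2f$ disjoint paths the adversary can corrupt $f$ of them, so the receiver may see the true value $b$ on $f$ paths and $\bar{b}$ on $f$ paths; everything then rests on the receiver learning a trustworthy set of accusations. But accusations travel over the same corruptible network, so a faulty node can inject a fabricated ``accusation of edge $(x,y)$ by $x$'' for non-faulty $x,y$ on up to $f$ of the paths carrying that accusation, and the receiver is back to an $f$-versus-$f$ tie about the accusations themselves. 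Your proposed invariant (every validly accused edge touches a faulty vertex) is fine, but it does not tell the receiver which received accusations are valid, and the induction ``on the number of unaccused edges or on $f$'' is named but never set up; there is also the case of two consecutive faulty relays where the tamperer's predecessor is faulty and silent, so no non-faulty witness accuses the corrupted path at all. In short, the claim that every non-faulty receiver can always recover every non-faulty source's value is exactly the statement you have not proved, and the paper does not prove it either---because it does not need it.

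The paper's proof sidesteps all-pairs reliable broadcast via a dichotomy. It defines ``reliably receives'' as identical receipt on $f+1$ disjoint paths and establishes two asymmetric facts: first, a \emph{faulty} source is always reliably received (local broadcast forces it to inject a single value into all $2f$ paths, and only the remaining $f-1$ faults can corrupt paths, leaving $f+1$ clean ones); second, if some node $v$ reliably receives a non-faulty source's input while another node $w$ does not, then exactly $f$ paths to $w$ were corrupted, and $v$ can locate one faulty node on each of them from the round-2 reports, i.e.\ $v$ identifies \emph{all} $f$ faults. Nodes that identify all faults (type A) can thereafter filter faulty paths outright; the remaining non-faulty nodes (type B) provably share an identical view and decide by majority, after which type A nodes adopt a type B decision over the $\geq f \geq 1$ fault-free disjoint paths. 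So where you try to force symmetric reliable delivery and get stuck on accusation dissemination, the paper converts each failure of reliable delivery into complete fault identification at some node and builds the decision rule around that asymmetry. To salvage your approach you would need to actually close the accusation-bootstrapping loop, which is a substantial missing piece rather than routine bookkeeping.
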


The sufficient condition above is proved constructively by providing a correct Byzantine consensus algorithm
\cite{naqvi18thesis}.
A simpler algorithm for $2f$-connectivity is given in Appendix \ref{section alternate algorithm 2f}.
There is a gap between the necessary and sufficient conditions in Theorems \ref{thm:necessary} and \ref{thm:sufficient}, due the different thresholds on vertex connectivity. The following result suggests that it may be possible to significantly tighten the sufficient condition in Theorem \ref{thm:sufficient}.

\begin{theorem}
	\label{thm:graph}
	There exists a graph with vertex connectivity $\lfloor{\frac{3f}{2}}\rfloor+1$ on which Byzantine consensus
	with up to $f$ Byzantine faults is achievable under the local-broadcast model. 
\end{theorem}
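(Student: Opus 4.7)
The plan is to prove Theorem~\ref{thm:graph} constructively: exhibit a single graph $G^*$ (one for each $f$) with vertex connectivity exactly $\lfloor 3f/2 \rfloor + 1$ together with a Byzantine consensus algorithm $\mathcal{A}$ that tolerates up to $f$ faults on $G^*$. Since the claim is existential, it suffices to give one such $(G^*, \mathcal{A})$; there is no need to characterize all graphs meeting the lower bound from Theorem~\ref{thm:necessary}. Note also that for $f \in \{1,2\}$ one has $\lfloor 3f/2 \rfloor + 1 = 2f$, so Theorem~\ref{thm:sufficient} already gives the result; the interesting cases are $f \geq 3$.

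For the graph, I would start from a small, highly symmetric host graph (for example $K_{2f+1}$, which is $2f$-connected) and delete a carefully chosen set of edges so as to bring the vertex connectivity down to exactly $\lfloor 3f/2 \rfloor + 1$. A natural choice is to delete edges in a matching-like pattern incident to one or two distinguished vertices, so that a specific minimum vertex cut has the desired size while the rest of the graph remains dense. Verifying the connectivity of such an explicit small graph is then a direct application of Menger's theorem.

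For the algorithm on $G^*$, the key primitive is the equivocation-detection mechanism illustrated in Figure~2: whenever a node $B$ forwards a message it claims originated at neighbor $A$, the local-broadcast property lets $A$ hear $B$'s retransmission and immediately broadcast a certified accusation that propagates through the network together with the path it has traversed. Non-faulty nodes decide by majority vote on source values that have been delivered without a sustained accusation against every simple path used to carry them. Validity is immediate from majority agreement on a unanimous input, and termination follows from running $\mathcal{A}$ for a bounded number of rounds proportional to the diameter of $G^*$.

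The main obstacle, as expected, is agreement. Because the vertex connectivity is only $\lfloor 3f/2 \rfloor + 1$, an adversary could in principle place up to $f$ faulty nodes inside a minimum cut, leaving only $\lceil f/2 \rceil + 1$ non-faulty nodes bridging the two sides. I will need to show that on the chosen $G^*$ these bridging non-faulty nodes, together with the accusations propagated by local broadcast, nevertheless give every pair of non-faulty nodes a consistent certified view of each source's value. I expect this to reduce to a finite case analysis over placements of the $f$ faulty vertices in $G^*$, relying on the symmetry of the construction to keep the number of essentially distinct cases small; this structural argument is the technically delicate step of the proof.
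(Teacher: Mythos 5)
Your proposal has two genuine gaps. The first is in the construction itself: you propose to obtain $G^*$ by deleting edges from $K_{2f+1}$, but every vertex of $K_{2f+1}$ has degree exactly $2f$, so deleting \emph{any} edge leaves two vertices of degree $2f-1$, and Theorem~\ref{thm:necessary} then makes consensus impossible on the resulting graph. More generally, a witness for Theorem~\ref{thm:graph} must simultaneously have minimum degree at least $2f$ and vertex connectivity exactly $\lfloor 3f/2 \rfloor + 1 < 2f$ (for $f \ge 3$, the only nontrivial cases, as you note); on $2f+1$ vertices minimum degree $2f$ forces the complete graph, so any valid witness has strictly more than $2f+1$ vertices and cannot be reached by thinning $K_{2f+1}$. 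The structural condition that the witness must satisfy is precisely the $f$-goodness of the appendix, which is shown there to be equivalent to $\bigl(\lfloor 3f/2 \rfloor + 1\bigr)$-connectivity together with minimum degree $2f$; a correct construction has to be designed around both constraints at once (e.g., two dense blocks joined through a separator of size exactly $\lfloor 3f/2 \rfloor + 1$, padded so every vertex keeps degree $2f$), and your sketch does not produce such a graph.

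The second gap is that the agreement argument --- which you correctly identify as the crux --- is not carried out: you state only that you ``expect'' a finite case analysis over fault placements to succeed. The entire content of the theorem is that consensus survives when connectivity drops below the $2f$ threshold of Theorem~\ref{thm:sufficient}, i.e., when the adversary can occupy all but $\lfloor f/2 \rfloor + 1$ vertices of a minimum cut (note: $\lfloor 3f/2\rfloor + 1 - f = \lfloor f/2\rfloor + 1$, not $\lceil f/2\rceil + 1$ as you write). Showing that the accusation mechanism of Figure~2 lets the two sides of such a cut reach a consistent certified view through so few non-faulty bridges is exactly what must be proved, and it is also where the $2f$-connectivity algorithm of Appendix~\ref{section alternate algorithm 2f} breaks down (its reliable-receive primitive needs $f+1$ clean disjoint paths, which a cut of size $\lfloor 3f/2\rfloor+1$ cannot guarantee). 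Without a concrete graph, a precisely specified algorithm, and that analysis, the proposal is a plan rather than a proof.
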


Motivated by the above result, we are presently working to bridge the gap between Theorems
\ref{thm:necessary} and \ref{thm:sufficient}.

	\bibliographystyle{abbrv}
	\bibliography{ref}

	\appendix


	\section{Algorithm for $2f$-Connectivity ($f > 0$)}	\label{section alternate algorithm 2f}
\begin{definition} \label{definition reliable receive}
	For two nodes $u, v \in V$, $v$ reliably receives a message sent by $u$ if
	\begin{enumerate}[topsep=0pt, itemsep=0pt]
		\item $u = v$,
		\item $v$ is a neighbor of $u$, or
		\item $v$ receives the message identically on at least $f + 1$ node disjoint $uv$-paths.
	\end{enumerate}
\end{definition}

Observe that if a node $u$ sends a message $m$, then any node $v$ can not reliably receive a message from $u$ other than $m$.

For simplicity, fix $2f$ node disjoint paths for each pair of nodes.
We say that a non-faulty node is a \emph{type A} node if it knows the identity of all $f$ faulty nodes.
Every other non-faulty node is a \emph{type B} node.
Initially, all non-faulty nodes are type B nodes.
As the algorithm is run, some non-faulty nodes will discover the identity of faulty nodes and transition to type A nodes.
We show that by the end of the algorithm, all type A nodes have reached consensus on the output.
Since type B nodes know the identity of all faulty nodes, they can ignore messages from paths with faulty nodes (faulty paths) and receive the correct decision value of type A nodes.

The algorithm proceeds in 3 rounds.
In round 1, each node $u \in V$ sends its input value to the entire network (i.e. ``floods'' its input value).
As all the communication is synchronous, $u$ can wait for $n$ synchronous time steps to ensure that the input value is propagated to each node in the graph.
Since the communication is via broadcast, all non-faulty neighbors of $u$ reliably receive the same value from $u$ in round 1, even if $u$ is faulty.
Therefore, we assume that the input value of a faulty node is the value it floods in the first round.
In round 2, for each node $v \in V$, its neighbors report on the messages $v$ propagated in round 1.
As before, after $n$ synchronous time steps, this information has been propagated to the entire network.
Again, due to the broadcast model, each neighbor of $v$ knows the exact messages propagated by $v$ in round $1$.
These messages help identify faulty nodes if they exhibit faulty behaviour (i.e. tamper with messages).
The details are in the proof of Lemma \ref{lemma faulty node can not hide}.
In round 3, type B nodes reach consensus and inform the type A nodes on the decision.
If no type B nodes exist, then type A nodes decide by themselves.

The algorithm is as follows.

\begin{enumerate}[label=Round \arabic*:,topsep=0pt, itemsep=0pt]
	\item (Flood)
	Each node $u$ floods its input value.

	\item (Report)
	For each node $v \in V$, its neighbors report on the messages propagated by $v$ in round 1.
	At the end, each node $w \in V$, attempts to discover the faulty nodes (details in proof of Lemma \ref{lemma faulty node can not hide}).

	\item (Decide)
	Fix an arbitrary non-faulty node $u$.
	If $u$ is a type B node, then it decides from the input values it received reliably in round 1, by taking the majority, and then floods the decision value.
	If $u$ is a type A node, then it waits for a decision value from a non-faulty (type B) node.
	If a decision value is received from a non-faulty (type B) node, then $u$ decides on this value.
	Otherwise, $u$ decides from the input values of all non-faulty nodes by taking the majority.
\end{enumerate}

\begin{lemma} \label{lemma faulty node can not hide}
	Message sent by a faulty node is received reliably by every node.
\end{lemma}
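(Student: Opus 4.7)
The plan is to leverage the $2f$-connectivity of $G$ to exhibit, for every non-faulty receiver, at least $f+1$ node-disjoint paths from the faulty sender $u$ that carry $u$'s round-1 broadcast unaltered, thereby invoking clause (3) of Definition~\ref{definition reliable receive}.

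First I would fix the faulty node $u$ and let $m$ denote the message it transmits in round 1. Under local-broadcast every non-faulty neighbor of $u$ receives exactly $m$, independent of anything $u$ does elsewhere; this is what pins down a well-defined ``message sent by $u$'' in the first place. If the receiver $w$ is $u$ itself or a neighbor of $u$, reliable reception of $m$ holds immediately by clauses (1) or (2) of Definition~\ref{definition reliable receive}, so the only nontrivial case is a non-faulty non-neighbor $w$.

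For that case, since $G$ has vertex connectivity at least $2f$, the algorithm's preamble fixes $2f$ node-disjoint $uw$-paths with pairwise disjoint internal-vertex sets. Beyond $u$ itself there are at most $f-1$ other faulty nodes in $V$, so at most $f-1$ of the fixed paths contain a faulty internal vertex, leaving at least $2f-(f-1)=f+1$ paths whose interiors are entirely non-faulty. Along any such ``clean'' path the first internal vertex is a non-faulty neighbor of $u$ that received $m$ directly, and every subsequent internal vertex is non-faulty and forwards $m$ faithfully during the $n$-step flood. Consequently $w$ receives $m$ identically along at least $f+1$ node-disjoint $uw$-paths, satisfying Definition~\ref{definition reliable receive}.

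The main obstacle I anticipate is semantic rather than combinatorial: making precise what ``the'' message sent by a Byzantine $u$ actually means, since $u$ may transmit arbitrary data in later time steps or try to equivocate about its input. The local-broadcast property settles this cleanly---$u$'s single round-1 physical transmission is atomic and identical across all of its neighbors---so a canonical $m$ exists, and from there the disjoint-paths counting argument tracks $m$ through the flood without any appeal to the round-2 report phase.
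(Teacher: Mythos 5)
Your proof is correct and follows essentially the same route as the paper's: handle the trivial cases via clauses (1)--(2) of Definition~\ref{definition reliable receive}, then count that at most $f-1$ of the $2f$ fixed node-disjoint $uv$-paths can contain a faulty node other than $u$, leaving $f+1$ clean paths that all carry the single message $u$ broadcast (identical across neighbors by the local-broadcast property). Your added discussion of why ``the'' message of a Byzantine sender is well defined is a slightly more explicit version of the paper's one-line appeal to the broadcast model, but the argument is the same.
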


\begin{proof}
	Let $u$ be an arbitrary faulty node and let $v$ be an arbitrary node in the graph.
	If $u = v$ or $v$ is a neighbor of $u$, then the claim is trivially true from Definition \ref{definition reliable receive}.
	Otherwise, there exist $2f$ node disjoint paths from $u$ to $v$ as the graph is $2f$-connected.
	Since $u$ is faulty, only $f-1$ of these paths can have faulty nodes.
	Moreover, due to the broadcast model, $u$ sends the same message on all the $2f$ node disjoint paths.
	Therefore $v$ receives identical messages from the remaining $f+1$ node disjoint paths that do not have a faulty node.
\end{proof}

\begin{lemma} \label{lemma identify faulty nodes}
	Let $v$ and $w$ be distinct nodes.
	In round 1, if $v$ reliably receives an input value of some node $u \in V$ and $w$ does not, then $v$ knows the identity of $f$ faulty nodes after round $2$.
\end{lemma}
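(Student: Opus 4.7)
I would proceed in three stages: first pin down $u$'s honesty and the value $b$; then localize the faulty nodes on the $uw$-paths; and finally use round~2 reports to name them. By the contrapositive of Lemma~\ref{lemma faulty node can not hide}, $u$ must be non-faulty, since otherwise its round-1 broadcast would be reliably received by every node, including $w$, contradicting the hypothesis. Hence $b$ is $u$'s true input, and every non-faulty relay forwards $b$ in round~1. The algorithm fixes $2f$ node-disjoint $uw$-paths, and any such path with no \emph{interior} faulty node delivers $b$ to $w$. If $k$ of them are tampered, the remaining $2f-k$ all deliver $b$, so $w$'s failure to reliably receive $b$ forces $2f-k \leq f$, i.e.\ $k \geq f$. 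Node-disjointness together with the global cap of $f$ faulty nodes forces $k \leq f$, so $k = f$ exactly; each tampered path contains a unique faulty interior node, and these $f$ nodes are \emph{all} the faulty nodes in the system (a small corollary is that $w$ itself is non-faulty, since any $uw$-path endpoint fault would reduce the interior budget below $f$).

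For the identification step, consider a tampered path $u = y_0, y_1, \ldots, y_{\ell+1} = w$, and let $c_i$ denote the value $y_i$ broadcast for source $u$ in round~1. By local broadcast, every neighbor of $y_i$ directly observes $c_i$, and if $y_i$ is non-faulty then $c_i = c_{i-1}$. In round~2 those neighbors flood their observations; by Lemma~\ref{lemma faulty node can not hide} applied to the round-2 flood, any report sent by a faulty reporter is itself reliably received by $v$, so two disagreeing reliably-received reports about $y_i$ immediately expose a faulty reporter. Combining this with the fact that all non-faulty neighbors of $y_i$ agree on $c_i$, $v$ reconstructs the sequence $c_0, c_1, \ldots, c_{\ell+1}$, locates the unique index $i^\star$ with $c_{i^\star} \neq c_{i^\star-1}$, and flags $y_{i^\star}$ as faulty. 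Iterating over all $f$ tampered paths yields all $f$ faulty identities.

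The main obstacle is making this reconstruction rigorous. Each $y_i$ has degree only $\geq 2f$, so a naive majority of neighbor reports need not be decisive: up to $f$ faulty neighbors could lie in concert against $f$ truthful ones. The proof must instead combine (i) consistency of non-faulty neighbors of $y_i$ (all of whom agree on $c_i$ by broadcast), (ii) reliable receipt of faulty neighbors' statements via Lemma~\ref{lemma faulty node can not hide}, and (iii) node-disjointness of the $uw$-paths, which caps how many of $y_i$'s neighbors can simultaneously be faulty on other tampered paths. I expect an induction along each $uw$-path, anchored at $y_0 = u$ whose broadcast $b$ is already known to $v$ and propagating forward through non-faulty relays until the tamperer is exposed, to carry the argument through.
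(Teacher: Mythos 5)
Your first two stages match the paper exactly: $u$ is non-faulty by the contrapositive of Lemma~\ref{lemma faulty node can not hide}, at least $f$ of the $2f$ node-disjoint $uw$-paths must be tampered for $w$ to miss reliable receipt, node-disjointness plus the global budget forces exactly $f$ tampered paths with exactly one faulty node each, and these exhaust all $f$ faults. The divergence, and the gap, is in the identification step. You try to have $v$ reconstruct the whole sequence $c_0,\dots,c_{\ell+1}$ of values broadcast along a tampered path by adjudicating round-2 reports from each $y_i$'s \emph{neighbors}, and you correctly observe that this runs into a deadlock: $y_i$ may have as few as $2f$ neighbors, $f$ of them could be faulty and lie in concert, and no majority or consistency argument among reporters breaks the tie. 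You then state that you ``expect an induction \dots to carry the argument through'' without actually carrying it through. That is not a proof of the step that the lemma actually turns on; as written, $v$ has no established way to determine $c_i$ for a non-faulty $y_i$, so the index $i^\star$ is never rigorously located.

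The paper's resolution sidesteps the neighbor-report adjudication entirely by applying Lemma~\ref{lemma faulty node can not hide} to the path node $y_i$ \emph{itself} rather than to its reporters: since the tamperer $z$ is faulty, only $f-1$ other faults remain to corrupt the $2f$ node-disjoint $zv$-paths, so $v$ reliably receives exactly what $z$ broadcast, namely $\bar{b}$. Conversely, for any non-faulty $x$ preceding $z$ on the path, $x$ broadcast $b$, and by the observation following Definition~\ref{definition reliable receive} no node can reliably receive from $x$ a message other than the one it actually sent; hence $v$ can never be tricked into reliably attributing $\bar{b}$ to $x$. The decision rule is therefore simply ``flag the first node on each tampered path from which $\bar{b}$ is reliably received,'' and the two one-sided guarantees (the tamperer is always flagged-able, no earlier node ever is) make it sound and complete with no majority vote and no induction. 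If you replace your reporter-based reconstruction with this direct application of Lemma~\ref{lemma faulty node can not hide} to each $y_i$, your argument closes; as it stands, the obstacle you named is real and unresolved.
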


\begin{proof}
	Fix a node $u$ with input $b \in \set{ 0, 1 }$.
	Observe that, by Lemma \ref{lemma faulty node can not hide}, $u$ is a non-faulty node since $w$ did not receive $b$ from $u$ reliably in round 1.
	Note also that $v$ reliably receives $b$ from $u$ in round 1.
	Let $P_1, \dots, P_{2f}$ be the $2f$ node disjoint $uw$-paths.
	Since $w$ did not reliably receive $b$ from $u$ in round 1, therefore exactly $f$ of these paths have faulty nodes.
	WLOG let these paths be $P_1, \dots, P_f$.
	By Lemma \ref{lemma faulty node can not hide}, after round 2, $v$ reliably receives that some nodes on $P_1 \dots, P_f$ forwarded $\bar{b}$ in round 1.
	For each path, $v$ sets the first such node to be faulty.

	To see why this assignment is correct, consider an arbitrary path in $P_1, \dots, P_f$.
	WLOG let this path be $P_1$.
	Let $z$ be the faulty node in $P_1$ that tampers the message.
	Observe that each of $P_1, \dots, P_f$ has exactly one faulty node that tampers the message.
	In round 2, by Lemma \ref{lemma faulty node can not hide}, $v$ reliably receives that $z$ forwarded $\bar{b}$ in round 1.
	Moreover, let $x$ be an arbitrary node in $P_1$ before $z$.
	Then $x$ is non-faulty and forwarded $b$ in round 1.
	Therefore, in round 2, $v$ can not reliably receive that $x$ forwarded $\bar{b}$ in round 1.
\end{proof}

\begin{lemma} [Agreement] \label{lemma agreement}
	All type B nodes reliably receive the same set of input values in round 1.
\end{lemma}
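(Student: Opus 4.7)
The plan is to reduce this directly to Lemma \ref{lemma identify faulty nodes} via a short contradiction argument. Recall that ``type B'' means non-faulty but not (yet) in possession of the identities of all $f$ faulty nodes, while ``type A'' means non-faulty and knowing all $f$ faulty nodes after round 2. So the statement I want is: no two type B nodes can differ in which round-1 input values they reliably receive.

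Concretely, I would suppose for contradiction that there are two type B nodes $v$ and $w$ whose round-1 reliable-reception sets differ. Then there is some node $u \in V$ whose input is reliably received by one of them but not the other; WLOG $v$ reliably receives $u$'s input while $w$ does not. This is exactly the hypothesis of Lemma \ref{lemma identify faulty nodes}, whose conclusion is that $v$ learns the identities of all $f$ faulty nodes after round 2. But then $v$ is a type A node, contradicting the assumption that $v$ was type B. Hence the reliable-reception sets of any two type B nodes agree.

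The only small thing to check carefully is that Lemma \ref{lemma identify faulty nodes} is applicable here. Its statement merely requires that $v$ and $w$ be distinct nodes and that such a discrepancy exist; it does not need $w$ to be non-faulty. (Implicitly, $u$ is forced to be non-faulty by the argument inside the lemma, since by Lemma \ref{lemma faulty node can not hide} every faulty node's round-1 message is reliably received by everyone.) So the hypotheses go through, and the contradiction is immediate.

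I do not expect any real obstacle: the lemma does almost all the work, and the agreement statement is essentially its contrapositive applied symmetrically to a pair of type B nodes. If anything, the only subtlety is making sure one does not accidentally apply Lemma \ref{lemma identify faulty nodes} in a direction where the ``non-receiving'' node is itself type A — but this is harmless, since the conclusion is about the receiving node $v$ becoming type A, which is already enough to contradict the assumption that $v$ is type B.
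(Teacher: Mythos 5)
Your proposal is correct and follows essentially the same argument as the paper: assume two type B nodes $v$ and $w$ differ on some node $u$'s input, apply Lemma \ref{lemma identify faulty nodes} to conclude $v$ becomes type A after round 2, and derive a contradiction. The extra care you take about the applicability of Lemma \ref{lemma identify faulty nodes} is sound but not needed beyond what the paper already does.
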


\begin{proof}
	Suppose, for the sake of contradiction, that two nodes $v$ and $w$ are type B nodes and there exists a node $u$ such that $v$ reliably receives the input of $u$ and $w$ does not.
	Then, by Lemma \ref{lemma identify faulty nodes}, $v$ knows the identity of $f$ faulty nodes after round 2.
	Therefore, $v$ must be a type A node, a contradiction.
\end{proof}

\begin{lemma} [Validity] \label{lemma validity}
	Each node reliably receives input values of at least $2f$ other nodes.
\end{lemma}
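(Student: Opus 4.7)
The plan is to reduce this to a statement about the minimum degree of $G$, using clause (2) of Definition \ref{definition reliable receive}, which says that a neighbor's message is automatically received reliably. Since the hypothesis of the theorem is that $G$ is $2f$-connected, standard graph theory gives us that the minimum degree of $G$ is at least $2f$, so every node has at least $2f$ neighbors.

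Concretely, fix an arbitrary node $w \in V$. First I would invoke $2f$-connectivity to conclude that $w$ has at least $2f$ neighbors in $G$, call them $u_1, \dots, u_{2f}$, all distinct from $w$. Next, in round 1 each $u_i$ floods its input; because of the local-broadcast model, the direct transmission of $u_i$ is received by $w$ regardless of whether $u_i$ is faulty or not. By clause (2) of Definition \ref{definition reliable receive}, this means $w$ reliably receives the round-1 input value of each $u_i$. Hence $w$ reliably receives input values of at least $2f$ other nodes.

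There is no real obstacle here; the lemma is essentially immediate once one observes that the reliable-receive definition treats neighbors as a base case and that $2f$-connectivity forces minimum degree $\ge 2f$. The only subtlety worth mentioning in writing is that the argument does not need to distinguish between faulty and non-faulty $u_i$: what is flooded by a (possibly faulty) neighbor in round 1 is, by the convention adopted just before the algorithm, taken to be its input, and the local-broadcast property guarantees that all its non-faulty neighbors see the same value. So the proof is genuinely a one-line application of clause (2) of the reliable-receive definition, combined with the degree bound implied by $2f$-connectivity.
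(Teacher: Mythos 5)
Your proposal is correct and follows exactly the paper's own argument: $2f$-connectivity forces minimum degree at least $2f$, and clause (2) of Definition \ref{definition reliable receive} makes each neighbor's flooded value reliably received. The extra remark about faulty neighbors being handled by the convention that a node's round-1 broadcast defines its input is a nice clarification but not a departure from the paper's proof.
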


\begin{proof}
	Since the graph is $2f$ connected, therefore each node has at least $2f$ neighbors.
	By Definition \ref{definition reliable receive} each node reliably receives input from these $2f$ nodes.
\end{proof}

\begin{theorem}
	The algorithm given above achieves Byzantine consensus with up to $f$ Byzantine faults ($f > 0$) if the vertex connectivity of $G(V, E)$ is at least $2f$.
\end{theorem}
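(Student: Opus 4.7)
The plan is to verify termination, validity, and agreement in turn, leaning on the lemmas already established. Termination is immediate, since the algorithm runs for only three synchronous rounds, each of length at most $n$, so every non-faulty node outputs in bounded time.

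For validity, suppose every non-faulty node has input $b$. A type B node $v$ reliably receives its own input together with, by Lemma~\ref{lemma validity}, the inputs of at least $2f$ other nodes, for a total of at least $2f+1$ reliably received inputs; at most $f$ of these come from faulty nodes, so at least $f+1$ of them equal $b$ and $v$ floods and decides $b$. A type A node either adopts a value flooded by some type B node---which must equal $b$ by the preceding argument---or, if no type B node exists, decides by majority over the inputs of all non-faulty nodes, all of which equal $b$. Either way, the decision is $b$.

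Agreement is the core step. Lemma~\ref{lemma agreement} says that all type B nodes reliably receive the same multiset of round-1 inputs, so they compute and flood the same majority value $d$, which handles agreement among type B nodes. To extend agreement to type A nodes I would split into two cases. If at least one type B node $w$ exists, then an arbitrary type A node $v$ learns $w$'s flooded value $d$ as follows: of the $2f$ chosen node-disjoint $wv$-paths, at most $f$ contain a faulty intermediate (since both $w$ and $v$ are non-faulty), so at least $f$ paths are fault-free, and because $v$ knows the full faulty set it identifies these paths and accepts the value $d$ they carry. If no type B node exists, every non-faulty node is type A with identical faulty-set knowledge; the same path-identification argument, together with Lemma~\ref{lemma faulty node can not hide} for faulty senders, lets every type A node recover the round-1 input of every other node, so they all compute the same majority over the non-faulty inputs.

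The step I expect to be the main obstacle is justifying how a type A node recovers the round-1 input of a non-faulty sender (or the round-3 decision of a non-faulty type B sender), because the $f+1$-path threshold in Definition~\ref{definition reliable receive} is not directly guaranteed when the sender is non-faulty---only $f$ fault-free node-disjoint paths are assured. The resolution is that a node which already knows the full faulty set does not require the blind $f+1$ threshold: a message arriving on a path whose intermediate nodes are all known to be non-faulty can be trusted outright, and $2f$-connectivity guarantees at least $f$ such paths from every non-faulty source.
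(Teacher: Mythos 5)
Your proposal is correct and follows essentially the same structure as the paper's proof: termination by construction, the same case split on whether a type B node exists, Lemma~\ref{lemma agreement} for agreement among type B nodes, Lemma~\ref{lemma validity} for validity, and the observation that a type A node, knowing the full faulty set, needs only one fault-free path (of the at least $f$ guaranteed by $2f$-connectivity) to recover a non-faulty sender's message. Your explicit remark that the $f+1$ threshold of Definition~\ref{definition reliable receive} is not needed once the faulty set is known is exactly the paper's implicit justification, just stated more carefully.
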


\begin{proof}
	The termination follows from the construction of the algorithm.
	For validity and agreement, there are two cases to consider.

	\textbf{There is at least one type B node:}
	For agreement, note that by Lemma \ref{lemma agreement} all type B nodes receive the same input values and therefore decide on the same value by taking the majority.
	Type A nodes know the identity of all $f$ ($f > 0$) faulty nodes and so they can ignore messages on paths with faulty nodes.
	So we only need one non-faulty path between a type A node and a type B node.
	Since there are at least $f$ node disjoint paths without any faulty nodes between any two nodes, therefore in round 3 each type A node correctly receives a message from a type B node about the final decision and decides on the same value.
	For validity, note that by Lemma \ref{lemma validity} each type B node is aware of input values of at least $2f + 1$ nodes (including its own).
	If the input of all non-faulty nodes equals $b \in \set{0, 1}$, then by taking the majority a type B node will decide on $b$.

	\textbf{There are no type B nodes:}
	Let $u$ be an arbitrary non-faulty node.
	Since there are no type B nodes, $u$ is a type A node that does not receive any decision value from any type B node in round 3.
	Since $u$ knows the identity of $f$ faulty nodes, $u$ can check messages received from non-faulty nodes in round 1 and ignore messages from any path that contains a faulty node.
	Therefore, $u$ knows the input values of all non-faulty nodes.
	For agreement, observe that all non-faulty nodes are type A and each non-faulty node decides on the same decision value by taking the majority of the input values of non-faulty nodes.
	For validity, observe that all non-faulty nodes only consider the input values of non-faulty nodes.
\end{proof}

	\section{Alternate characterization of the Necessity Condition}
Consider an undirected graph $G = (V, E)$.
For a set $S \subset V$, define $\Gamma(S) := \set{ v \in V - S \mid \exists u \in S, uv \in E }$.
For two disjoint sets $S, T \subset V$, define $T \connect{k} S$ if $\envert[0]{ \Gamma(S) \cap T } \ge k$.
For a set $F \subset V$, $\del{ L, C, R }$ is an \emph{$F$-partition} of $G$ if
\begin{enumerate}[topsep=0pt, itemsep=0pt]
	\item $\del{ L, C, R }$ is a partition of $V$,
	\item $L - F$ is non-empty, and
	\item $R - F$ is non-empty.
\end{enumerate}
$G$ is \emph{$f$-good} if for every set $F \subset V$ of cardinality at most $f$, every $F$-partition $\del{ L, C, R }$ of $V$ is such that
\begin{enumerate}[topsep=0pt, itemsep=0pt]
	\item either $R \cup C \connect{f + 1} L - F$,
	\item or $L \cup C \connect{f + 1} R - F$.
\end{enumerate}

\begin{lemma}
	If a graph $G$ is $2f$-connected, then it is $f$-good.
\end{lemma}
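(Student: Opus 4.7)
The plan is to argue by contradiction: suppose the $F$-partition $(L, C, R)$ violates both conditions, i.e., $|A| \le f$ and $|B| \le f$, where $A = \Gamma(L - F) \cap (R \cup C)$ and $B = \Gamma(R - F) \cap (L \cup C)$. I will exhibit two vertex cuts whose sizes, pushed up by $2f$-connectivity, force $F$ into both $L$ and $R$ — impossible when $f \ge 1$, since $L$ and $R$ are disjoint.

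The first step is a minimum-degree estimate. Because $G$ is $2f$-connected, every vertex has degree at least $2f$. For any $u \in L - F$, every neighbor of $u$ lies in $L \setminus \{u\}$ or else in $A$ (any neighbor in $C \cup R$ belongs to $\Gamma(L - F) \cap (C \cup R) = A$). Hence $\deg(u) \le |L| - 1 + |A|$, which together with $|A| \le f$ yields $|L| \ge f + 1$; symmetrically $|R| \ge f + 1$.

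The key move is to consider the cut $T_L := (L \cap F) \cup A$. By the observation above, every edge leaving $L - F$ ends in $T_L$, so $T_L$ separates $L - F$ from $(C \cup R) \setminus A$ in $G$. Both sides are non-empty: $L - F$ by the $F$-partition hypothesis, and $(C \cup R) \setminus A \supseteq R \setminus A$, which is non-empty since $|R| > |A|$. Therefore $|T_L| \ge 2f$; but $|T_L| \le |L \cap F| + |A| \le |F| + |A| \le 2f$, so equality holds throughout, forcing $|L \cap F| = |F| = f$ and hence $F \subseteq L$. The symmetric argument on $T_R := (R \cap F) \cup B$ forces $F \subseteq R$. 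Since $L \cap R = \emptyset$, this yields $F = \emptyset$, contradicting $|F| = f \ge 1$.

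The main obstacle is spotting the right cut. The naive candidate $F \cup A \cup B$ can be as large as $3f$ and supplies no contradiction. The trick is to notice that sealing off $L - F$ from the $(C \cup R)$-side requires only the $L$-part of $F$: edges leaving $L - F$ into $C \cup R$ already land in $A$, regardless of what lies in $C \cap F$ or $R \cap F$. This keeps $|T_L|$ at most $2f$, tight enough for $2f$-connectivity to pin down $F \subseteq L$; the symmetric $R$-side cut then delivers the incompatible $F \subseteq R$.
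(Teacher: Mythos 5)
Your proof is correct. It is built around the same key object as the paper's proof --- the separator $T_L = (L \cap F) \cup \bigl(\Gamma(L-F) \cap (R \cup C)\bigr)$, which is exactly the set the paper removes in its Cases 2 and 3 --- but you close the argument differently. The paper argues by contrapositive and splits into three cases according to whether $|L \cup C|$ and $|R \cup C|$ exceed $f$; in each case it bounds a \emph{single} cut strictly below $2f$, using either $|L \cap F| < f$ (when $|L \cup C| \le f$) or a WLOG symmetry giving $|L \cap F| \le \floor{f/2}$. You avoid the case analysis entirely: the minimum-degree consequence of $2f$-connectivity gives $|L|, |R| \ge f+1$, which guarantees both sides of both separators $T_L$ and $T_R$ are non-empty, and then the two inequalities $2f \le |T_L| \le |L \cap F| + f$ and $2f \le |T_R| \le |R \cap F| + f$ must be tight, forcing $F \subseteq L$ and $F \subseteq R$ simultaneously --- absurd for $f \ge 1$ since $L$ and $R$ are disjoint. (Both proofs implicitly need $f \ge 1$, which is the paper's standing assumption; for $f = 0$ the statement fails on disconnected graphs.) Your version is shorter and case-free, and the tightness/double-cut trick is elegant. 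The paper's case analysis earns its keep elsewhere, though: its Case 3 shows the cut has size at most $\floor{3f/2}$, which is precisely what is recycled to prove the sharper characterization (Lemma \ref{lemma connectivity and degree imply f good}) that $\bigl(\floor{3f/2}+1\bigr)$-connectivity plus minimum degree $2f$ already implies $f$-goodness; your argument as written only delivers the $2f$ threshold, although your minimum-degree step foreshadows the degree condition appearing in that finer result.
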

\begin{proof}
	Since $G$ is $2f$-connected so it is of size $n \ge 2f + 1$.
	We show the contrapositive that if a graph $G$ of size $n \ge 2f + 1$ is not $f$-good, then it has connectivity less than $2f$.
	Since $G$ is not $f$-good, there exists a set $F$ of cardinality at most $f$ and an $F$-partition $\del{ L, C, R }$ of $V$ such that
	\begin{enumerate}[topsep=0pt, itemsep=0pt]
		\item $R \cup C \notconnect{f + 1} L - F$, and
		\item $L \cup C \notconnect{f + 1} R - F$.
	\end{enumerate}
	There are three cases to consider:
	\begin{itemize}
		\item[Case 1:] $\envert{R \cup C} \le f$ and $\envert{L \cup C} \le f$.\\
		We have that $2f + 1 \le n = \envert{L \cup C \cup R} \le \envert{L \cup C} + \envert{R \cup C} \le 2f$, a contradiction.

		\item[Case 2:] $\envert{R \cup C} > f$ and $\envert{L \cup C} \le f$.
		Since $R \cup C \notconnect{f + 1} L - F$, we have that $\del{R \cup C} - \Gamma(L - F)$ is non-empty.
		Also, since $L - F$ is non-empty and $\envert{L} \le \envert{L \cup C} \le f$, we have that $\envert{L \cap F} < f$.
		Now $L - F$ has neighbors (outside of $L - F$) in either $L \cap F$ or $R \cup C$.
		Furthermore, we have that both $L - F$ and $\del{ R \cup C } - \Gamma( L - F )$ are non-empty.
		Therefore, removing $\Gamma(L - F) \cap \del{ R \cup C }$ and $L \cap F$ disconnects $L - F$ from $\del{ R \cup C } - \Gamma( L - F )$.
		To see that less than $2f$ nodes have been removed, observe that $\envert{ \Gamma(L - F) \cap \del{ R \cup C } } \le f$ (since $R \cup C \notconnect{f + 1} L - F$) and $\envert{ L \cap F } < f$.

		\item[Case 3:] $\envert{R \cup C} > f$ and $\envert{L \cup C} > f$.
		Since $R \cup C \notconnect{f + 1} L - F$ and $L \cup C \notconnect{f + 1} R - F$, we have that both $\del{R \cup C} - \Gamma(L - F)$ and $\del{ L \cup C } - \Gamma( R - F )$ are non-empty.
		WLOG we assume that $\envert{ L \cap F } \le \envert{ R \cap F }$ so that $\envert{ L \cap F } \le \floor{ \frac{f}{2} }$.
		Now $L - F$ has neighbors (outside of $L - F$) in either $L \cap F$ or $R \cup C$.
		Furthermore, we have that both $L - F$ and $\del{ R \cup C } - \Gamma( L - F )$ are non-empty.
		Therefore, removing $\Gamma(L - F) \cap \del{ R \cup C }$ and $L \cap F$ disconnects $L - F$ from $\del{ R \cup C } - \Gamma( L - F )$.
		To see that at most $\floor{ \frac{3f}{2} } \le 2f$ nodes have been removed, observe that $\envert{ \Gamma(L - F) \cap \del{ R \cup C } } \le f$ (since $R \cup C \notconnect{f + 1} L - F$) and $\envert{ L \cap F } \le \floor{ \frac{f}{2} }$ for a total of at most $f + \floor{\frac{f}{2}} = \floor{\frac{3f}{2}}$ nodes.
	\end{itemize}
\end{proof}

\begin{lemma} \label{lemma f good implies connectivity}
	If a graph $G$ is $f$-good, then it is $\del[0]{ \floor{ \frac{3f}{2} } + 1 }$-connected.
\end{lemma}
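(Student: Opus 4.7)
The plan is to prove the contrapositive: if the vertex connectivity of $G$ is at most $\floor{3f/2}$, then $G$ is not $f$-good. I begin by fixing a vertex cut $S \subseteq V$ with $|S| = k \leq \floor{3f/2}$ such that $V \setminus S$ splits into two nonempty sets $L, R$ with no edges between them. (In the degenerate subcase $|V \setminus S| \leq 1$, which forces $|V| \leq \floor{3f/2} + 1$, I would instead take any $L, R$ with $|L| = |R| = \floor{f/2} + 1$, set $C = V \setminus (L \cup R)$ and $F = \emptyset$; since $|\Gamma(L) \cap (R \cup C)| \leq |V| - |L| \leq \floor{3f/2} - \floor{f/2} = f$ and symmetrically for $R$, this already falsifies $f$-goodness.)

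For the main case I split $S$ into three disjoint pieces $S_L, S_C, S_R$ with $|S_L| = |S_R| = \max(0, k - f)$ and $|S_C| = k - 2|S_L|$, and define the partition and fault set by
\[
L' = L \cup S_L, \quad C' = S_C, \quad R' = R \cup S_R, \quad F = S_L \cup S_R.
\]
Routine checks show $L' \cupdot C' \cupdot R' = V$, $L' - F = L$, $R' - F = R$, and $|F| = 2\max(0, k - f) \leq 2 \floor{f/2} \leq f$, so $(L', C', R')$ is a valid $F$-partition. The key estimate uses that $L$ has no neighbors in $R$, hence $\Gamma(L) \subseteq S$; this gives
\[
\Gamma(L' - F) \cap (R' \cup C') = \Gamma(L) \cap (S_C \cup S_R) \subseteq S_C \cup S_R,
\]
of size $|S_C| + |S_R| = k - |S_L| = \min(k, f) \leq f$, with the symmetric bound for the $R'$-side. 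Both connection counts fall short of $f + 1$, certifying that $G$ is not $f$-good.

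The main obstacle I anticipate is identifying the correct placement of $F$ within the partition. The natural first attempt, which puts all of $S$ into $C'$ and designates the ``excess'' of $|S|$ over $f$ as $F$, does nothing: whenever $F \subseteq C'$, the intersection $\Gamma(L' - F) \cap (R' \cup C')$ still contains the $F$-vertices, so the bound does not improve. The fix---spreading $F$ evenly across $L'$ and $R'$ so that $F$-vertices are literally absorbed into the sides they came from---is dual to Case 3 of the preceding lemma, where the extra $\floor{f/2}$ vertices of the cut were handled by letting $F$ straddle both sides.
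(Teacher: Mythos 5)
Your main-case argument is correct and is essentially the paper's own proof: both split the vertex cut into three pieces, absorb two of them into $L$ and $R$ as the fault set $F$ so that $\Gamma(L-F)\cap(R\cup C)$ and $\Gamma(R-F)\cap(L\cup C)$ each land inside a portion of the cut of size at most $f$ (the paper fixes the piece sizes at $\le\lceil f/2\rceil$, $\le\lfloor f/2\rfloor$, $\le\lfloor f/2\rfloor$ rather than your $\max(0,k-f)$-sized sides, but the counting is the same). The only blemish is your degenerate subcase, where the choice $|L|=|R|=\lfloor f/2\rfloor+1$ need not fit inside $V$ when $|V|<2\lfloor f/2\rfloor+2$; any split of $V$ into two nonempty parts each of size at most $f$ (possible since $|V|\le 2f$ there) works instead, and the paper sidesteps this case entirely.
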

\begin{proof}
	We show that contrapositive that if a graph $G$ is not $\del[0]{ \floor{ \frac{3f}{2} } + 1 }$-connected, then it is not $f$-good.
	Since $G$ has connectivity less than $\del[0]{ \floor{ \frac{3f}{2} } + 1 }$, there exists a vertex cut $T$ of size at most $\floor{ \frac{3f}{2} }$ that separates two sets $A, B \subset V$.
	We create a set $F$ and a corresponding $F$-partition that violates the conditions for $f$-good to show that $G$ is not $f$-good.
	Partition the cut into $3$ roughly equal parts $T_1, T_2, T_3$ so that $\envert{T_1} \le \ceil{ \frac{f}{2} }$, $\envert{T_2} \le \floor{ \frac{f}{2} }$, and $\envert{T_3} \le \floor{ \frac{f}{2} }$.
	Let $F = T_1 \cup T_2$ so that $\envert{F} \le f$.
	We create an $F$-partition $\del{ L, C, R }$ as follows.
	Let $L = A \cup T_1$, $R = B \cup T_2$, and $C = T_3$.
	Observe that $\del{ L, C, R }$ is indeed a partition of $V$ and $L - F = A$ and $R - F = B$ are non-empty.
	Now we have that $\Gamma( L - F ) = \Gamma(A) = T$ and $\Gamma( R - F ) = \Gamma(B) = T$.
	Therefore $\del{ R \cup C } - \Gamma( L - F ) = \del{ B \cup T_2 \cup T_3 } - T = B$ is non-empty and $\del{ L \cup C } - \Gamma( R - F ) = \del{ A \cup T_1 \cup T_3 } - T = A$ is non-empty.
	Thus $\del{L, C, R}$ is indeed an $F$-partition.
	Now observe that $\Gamma( L - F ) \cap \del{R \cup C} = T \cap \del{ T_2 \cup T_3 \cup B } = T_2 \cup T_3$ has cardinality at most $\floor{ \frac{f}{2} } + \floor{ \frac{f}{2} } \le f$.
	Similarly $\Gamma( R - F ) \cap \del{L \cup C} = T \cap \del{ T_1 \cup T_3 \cup A } = T_1 \cup T_3$ has cardinality at most $\ceil{ \frac{f}{2} } + \floor{ \frac{f}{2} } = f$.
	Therefore we conclude that
	\begin{enumerate}[topsep=0pt, itemsep=0pt]
		\item $R \cup C \notconnect{f + 1} L - F$, and
		\item $L \cup C \notconnect{f + 1} R - F$.
	\end{enumerate}
	This completes the proof that $G$ is not $f$-good.
\end{proof}

\begin{lemma} \label{lemma f good implies degree}
	If a graph $G$ is $f$-good, then each node in $G$ has degree at least $2f$.
\end{lemma}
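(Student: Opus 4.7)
The plan is to prove the contrapositive: if some vertex $v$ has $\deg(v) = d \le 2f - 1$, then $G$ is not $f$-good. I will exhibit an $F$-partition $(L, C, R)$ of $V$ that violates both $\connect{f+1}$ conditions in the definition, analogously to the cut-based construction of Lemma~\ref{lemma f good implies connectivity} but with the neighborhood $N(v)$ playing the role of the vertex cut.

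The construction: choose any $S \subseteq N(v)$ with $|S| = \max(0, d - f)$, and set $L = \{v\} \cup S$, $C = \emptyset$, $R = V \setminus L$, and $F = S$. Since $d \le 2f - 1$, we have $|S| \le f - 1$, hence both $|F| \le f$ and $|L| = 1 + |S| \le f$. Also $L - F = \{v\}$ and $R - F = R$ are non-empty (the latter holds whenever $|V| \ge 2$, the only non-degenerate regime), so $(L, C, R)$ is a valid $F$-partition. Then $\Gamma(L - F) \cap (R \cup C) = N(v) \setminus S$ has size $\min(d, f) \le f$, giving $R \cup C \notconnect{f+1} L - F$; and since $C = \emptyset$ forces $V \setminus R = L$, we have $\Gamma(R - F) \subseteq L$ and therefore $|\Gamma(R - F) \cap (L \cup C)| \le |L| \le f$, giving $L \cup C \notconnect{f+1} R - F$. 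Both conditions fail, so $G$ is not $f$-good.

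The main subtlety lies in choosing $|S|$ to satisfy two competing constraints: $|S|$ must be at least $d - f$ so that $|N(v) \setminus S| \le f$ (which is what defeats the first condition), yet at most $f - 1$ so that $|L| = 1 + |S| \le f$ (which is what lets the crude bound $|\Gamma(R - F) \cap L| \le |L|$ suffice for the second condition, bypassing any need to analyze individual edges out of $R$). The hypothesis $d \le 2f - 1$ is exactly what makes these two inequalities simultaneously satisfiable with $|S| = d - f$, so there is no slack in the argument; if $d$ were $2f$, we would be forced to take $|S| = f$ and $|L| = f + 1$, and the bound on the second condition would degrade to $f + 1$, which is why the threshold in the lemma is sharp at degree $2f$.
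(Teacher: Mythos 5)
Your proof is correct and follows essentially the same route as the paper: prove the contrapositive by placing the low-degree vertex $v$ together with a subset $S$ of its neighbors into $L$, taking $F = S$, $C = \emptyset$, and $R$ the rest, then bounding $\envert{\Gamma(L-F)\cap(R\cup C)}$ by $f$ via the choice of $S$ and $\envert{\Gamma(R-F)\cap(L\cup C)}$ by $\envert{L}\le f$. The only cosmetic difference is that you take $\envert{S}=\max(0,d-f)$ where the paper takes $f-1$ neighbors (or all of them if fewer exist); both choices satisfy the two constraints you correctly identify as the crux.
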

\begin{proof}
	First note that if $G$ is of size $n \le 2f$, then $G$ cannot be $f$-good (partition the graph into two parts $L$ and $R$ of roughly equal size $\le f$ with $F, C = \emptyset$).
	Therefore we assume that $n \ge 2f + 1$.
	We show the contrapositive that if there exists a node in $G$ of degree less than $2f$, then $G$ is not $f$-good.
	Suppose $u$ has degree strictly less than $2f$.
	We create a set $F$ and a corresponding $F$-partition that violates the conditions for $f$-good to show that $G$ is not $f$-good.
	Let $F$ be arbitrary $f-1$ neighbors of $u$ or the set $\Gamma(u)$ if $u$ has less than $f-1$ neighbors.
	Let $L = \set{u} \cup F$, $R = V - F - \set{u}$, and $C = \emptyset$.
	Then $\del{L, R, C}$ is indeed a partition of $V$, $L - F = \set{u}$ is non-empty, and $R - F = R$ is non-empty since it has at least $f + 1$ nodes (since $n \ge 2f + 1$, $\envert{L} = f$, and $\envert{C} = 0$).
	We show that
	\begin{enumerate}[topsep=0pt, itemsep=0pt]
		\item $R \cup C \notconnect{f + 1} L - F$, and
		\item $L \cup C \notconnect{f + 1} R - F$.
	\end{enumerate}
	Now $u = L - F$ has at most $f$ neighbors in $R \cup C$ by construction.
	Also $\envert{L \cup C} = \envert{L} \le f$ so that $R - F = R$ can not have more than $f$ neighbors in $L \cup C$.
\end{proof}

\begin{lemma} \label{lemma connectivity and degree imply f good}
	If a graph $G$ is $\del[0]{ \floor{ \frac{3f}{2} } + 1 }$-connected and every node has degree at least $2f$, then it is $f$-good.
\end{lemma}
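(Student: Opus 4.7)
The plan is to prove the contrapositive: assume $G$ is not $f$-good and show that either the connectivity of $G$ is at most $\floor{3f/2}$ or some vertex has degree less than $2f$. By hypothesis there exist $F \subset V$ with $\envert{F} \le f$ and an $F$-partition $(L, C, R)$ such that $R \cup C \notconnect{f+1} L - F$ and $L \cup C \notconnect{f+1} R - F$. By the symmetry between $L$ and $R$, I may assume $\envert{L \cap F} \le \envert{R \cap F}$, and set $\ell := \envert{L \cap F} \le \floor{f/2}$.

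The key observation is that $\Gamma(L - F) \subseteq (L \cap F) \cup C \cup R$, so
\[
\envert{\Gamma(L - F)} \le \envert{L \cap F} + \envert{\Gamma(L - F) \cap (R \cup C)} \le \ell + f \le \floor{3f/2}.
\]
I would then split into two cases depending on whether $V - (L - F) - \Gamma(L - F)$ is non-empty. In the non-empty case, $\Gamma(L - F)$ is a proper vertex cut of size at most $\floor{3f/2}$, immediately contradicting the $\del[0]{\floor{3f/2} + 1}$-connectivity hypothesis; this case uses only the connectivity assumption.

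In the remaining case, $V = (L - F) \cup \Gamma(L - F)$, so $\envert{V} \le \envert{L - F} + \ell + f$. Comparing with the identity $\envert{V} = \envert{L - F} + \ell + \envert{C} + \envert{R}$ yields $\envert{C} + \envert{R} \le f$. I would then pick any $v \in R - F$ (non-empty) and bound its degree by $\envert{R - F} - 1$ neighbors inside $R - F$ plus at most $\envert{\Gamma(R - F)} \le \envert{R \cap F} + f$ outside (by the symmetric decomposition together with the second failed condition), and apply the minimum-degree hypothesis: $2f \le \envert{R - F} - 1 + \envert{R \cap F} + f$, whence $\envert{R} \ge f + 1$, contradicting $\envert{R} \le \envert{C} + \envert{R} \le f$. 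The subtle step is precisely this second case, which is the only place where both hypotheses are used together: connectivity alone cannot rule out the degenerate situation in which every vertex outside $L - F$ is already adjacent to $L - F$, and it is exactly there that the degree-$2f$ requirement on some vertex in $R - F$ is needed to force $R$ to be large enough to clash with the size estimate obtained from the $L$-side.
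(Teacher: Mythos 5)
Your proof is correct, and while it shares the paper's overall strategy --- argue the contrapositive, assume WLOG that $\envert{L \cap F} \le \floor{\frac{f}{2}}$, and produce either a separator of size at most $\floor{\frac{3f}{2}}$ or a vertex of degree below $2f$ --- the case decomposition is genuinely different. The paper splits into three cases according to whether $\envert{L \cup C}$ and $\envert{R \cup C}$ exceed $f$: if both are at most $f$ then $n \le 2f$ and every degree is below $2f$; if exactly one is small, a vertex of $L - F$ is shown to have fewer than $f$ neighbors in $L$ and at most $f$ in $R \cup C$; if both are large, removing $\del{L \cap F} \cup \del{\Gamma(L-F) \cap \del{R \cup C}}$ separates $L - F$ from the nonempty set $\del{R \cup C} - \Gamma(L - F)$. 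You instead split on whether $V - \del{L - F} - \Gamma(L - F)$ is empty, i.e., on whether $\Gamma(L-F)$ (of size at most $\envert{L \cap F} + f \le \floor{\frac{3f}{2}}$) is a proper separator; when it is not, your count forces $\envert{C} + \envert{R} \le f$, and the degree estimate $\deg(v) \le \envert{R} - 1 + f \le 2f - 1$ for $v \in R - F$ (which correctly invokes the second failed condition $L \cup C \notconnect{f+1} R - F$) finishes the argument. Each step checks out, including the subtle ones: $\Gamma(L-F)$ does disconnect $L-F$ from the rest when the rest is nonempty, and the identity $\envert{V} = \envert{L - F} + \envert{L \cap F} + \envert{C} + \envert{R}$ gives exactly the bound you claim. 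Your dichotomy is arguably cleaner in that it isolates the single degenerate configuration where connectivity alone cannot help and the degree hypothesis must enter; the paper's three-way split buys a slightly more elementary degree argument (its low-degree witness lies in $L - F$ and uses only the first failed condition), and its Case 3 separator is stated as the union $\del{L\cap F} \cup \del{\Gamma(L-F)\cap\del{R\cup C}}$, a superset of your $\Gamma(L-F)$, bounded by the identical counting.
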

\begin{proof}
	We show the contrapositive that if a graph $G$ is not $f$-good, then either it has connectivity at most $\floor{ \frac{3f}{2} }$ or there exists a node in $G$ of degree strictly less than $2f$.
	Since $G$ is not $f$-good, there exists a set $F$ of cardinality at most $f$ and an $F$-partition $\del{ L, C, R }$ of $V$ such that
	\begin{enumerate}[topsep=0pt, itemsep=0pt]
		\item $R \cup C \notconnect{f + 1} L - F$, and
		\item $L \cup C \notconnect{f + 1} R - F$.
	\end{enumerate}
	There are three cases to consider:
	\begin{itemize}[topsep=0pt, itemsep=0pt]
		\item[Case 1:] $\envert{R \cup C} \le f$ and $\envert{L \cup C} \le f$.\\
		We have that $n = \envert{L \cup C \cup R} \le \envert{L \cup C} + \envert{R \cup C} \le 2f$.
		Therefore, all nodes in $G$ have degree at most $2f - 1$.

		\item[Case 2:] $\envert{R \cup C} > f$ and $\envert{L \cup C} \le f$.
		Consider a node $u \in L - F$ (recall that $L - F$ is non-empty).
		We show that $u$ has degree strictly less than $2f$.
		Now $u$ has neighbors either in $L$ or in $R \cup C$.
		Observe that $\envert{L - u} = \envert{L} - 1 < \envert{L \cup C} \le f$ so that $u$ has strictly less than $f$ neighbors in $L$.
		Also $\envert{ \Gamma( L - F ) \cap \del{ R \cup C } } \le f$ since $R \cup C \notconnect{f + 1} L - F$.
		Therefore $u$ has at most $f$ neighbors in $R \cup C$.
		Thus $\envert{ \Gamma(u) } < 2f$, as required.

		\item[Case 3:] $\envert{R \cup C} > f$ and $\envert{L \cup C} > f$.
		Since $R \cup C \notconnect{f + 1} L - F$ and $L \cup C \notconnect{f + 1} R - F$, we have that both $\del{R \cup C} - \Gamma(L - F)$ and $\del{ L \cup C } - \Gamma( R - F )$ are non-empty.
		WLOG we assume that $\envert{ L \cap F } \le \envert{ R \cap F }$ so that $\envert{ L \cap F } \le \floor{ \frac{f}{2} }$.
		Now $L - F$ has either neighbors in $L \cap F$ or in $R \cup C$.
		Furthermore, we have that both $L - F$ and $\del{ R \cup C } - \Gamma( L - F )$ are non-empty.
		Therefore, removing $\Gamma(L - F) \cap \del{ R \cup C }$ and $L \cap F$ disconnects $L - F$ from $\del{ R \cup C } - \Gamma( L - F )$.
		To see that at most $\floor{ \frac{3f}{2} }$ nodes have been removed, observe that $\envert{ \Gamma(L - F) \cap \del{ R \cup C } } \le f$ (since $R \cup C \notconnect{f + 1} L - F$) and $\envert{ L \cap F } \le \floor{ \frac{f}{2} }$ for a total of at most $f + \floor{\frac{f}{2}} = \floor{\frac{3f}{2}}$ nodes.
	\end{itemize}
	This completes the proof that $G$ has either connectivity less than $\floor{ \frac{3f}{2} }$ or a node of degree less than $2f$.
\end{proof}

\begin{theorem}
	A graph $G$ is $f$-good if and only if it is $\del[0]{ \floor{ \frac{3f}{2} } + 1 }$-connected and every node has degree at least $2f$.
\end{theorem}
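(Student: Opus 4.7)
The plan is to observe that this theorem is essentially an immediate consequence of the three preceding lemmas in the appendix, which together characterize $f$-goodness. No new combinatorial work should be needed; the proof is simply the combination of a forward direction (split into two necessary conditions) and a reverse direction (a single sufficient condition).

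For the forward direction, assume $G$ is $f$-good. I would invoke Lemma \ref{lemma f good implies connectivity} to conclude that $G$ is $\del[0]{ \floor{ \frac{3f}{2} } + 1 }$-connected, and separately invoke Lemma \ref{lemma f good implies degree} to conclude that every node of $G$ has degree at least $2f$. Taken together, these give exactly the two properties claimed on the right-hand side of the biconditional.

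For the reverse direction, assume $G$ is $\del[0]{ \floor{ \frac{3f}{2} } + 1 }$-connected and every node has degree at least $2f$. Then a direct application of Lemma \ref{lemma connectivity and degree imply f good} yields that $G$ is $f$-good.

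Since both directions follow from previously established lemmas, there is no real obstacle here — the only care required is to make sure the two conditions appearing on the right-hand side really do correspond exactly to the hypotheses and conclusions of the cited lemmas, which a glance at their statements confirms. The proof can therefore be written in just a few lines, with the two directions presented separately and each pointing to the relevant lemma(s).
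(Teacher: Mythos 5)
Your proposal is correct and matches the paper's proof exactly: the paper also derives the theorem by citing Lemma \ref{lemma f good implies connectivity} and Lemma \ref{lemma f good implies degree} for the forward direction and Lemma \ref{lemma connectivity and degree imply f good} for the converse. No further comment is needed.
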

\begin{proof}
	From Lemmas \ref{lemma f good implies connectivity}, \ref{lemma f good implies degree}, and \ref{lemma connectivity and degree imply f good}.
\end{proof}
\end{document}